\newenvironment{appendix-lemma}[1]{\noindent{\bf Lemma~#1~} \em }
\newcommand{\REM}[1]{}
\newcommand{\RR}{\mathcal{R}}
\newcommand{\HH}{\mathcal{H}}
\newcommand{\Master}{{\bf Master} }
\newcommand{\Shuffle}{{\bf Shuffle} }
\newcommand{\HRLQ}{{\sf HRLQ}}
\newcommand{\MinBP}{\mbox{{\sf Min-BP}}}
\newcommand{\MinBR}{\mbox{{\sf Min-BR}}}
\newcommand{\NPHard}{\mbox{{\sf NP-Hard}}}
\newcommand{\HR}{\mbox{{\sf HR}}}
\title{How good are Popular Matchings?}
\author{Krishnapriya A M\inst{1}\thanks{Part of this work was done when the author was an M.Tech. student at IIT Madras.}, 
        Meghana Nasre\inst{2}, 
        Prajakta Nimbhorkar\inst{3}, 
        Amit Rawat\inst{4} \thanks{Part of this work was done when the author was an MS student at IIT Madras.}}
\institute{Citrix Research and Development India, \and 
Indian Institute of Technology Madras, India, \and 
Chennai Mathematical Institute India and UMI ReLaX \and
University of Massachusetts Amherst, USA}
\begin{document}

\maketitle

\begin{abstract}
In this paper, we consider the Hospital Residents problem (\HR) and the Hospital Residents problem
with Lower Quotas (\HRLQ).
In this model with two sided preferences, stability is a well
accepted notion of optimality. However, in the presence of lower quotas, a stable and feasible matching
need not exist.
For the \HRLQ\ problem, our goal therefore is to output a {\it good} feasible matching
assuming that a feasible matching exists. 
Computing matchings with minimum number of blocking pairs (\MinBP) and minimum number of blocking residents (\MinBR) are known to be
NP-Complete. The only approximation algorithms for these problems work under severe restrictions on the preference
lists.  We present an algorithm which circumvents this restriction and computes a {\em popular} matching
in the \HRLQ\ instance. We show that 
on data-sets generated using various generators, our algorithm performs very well in terms of blocking 
pairs and blocking residents. Yokoi~\cite{Yokoi17} recently studied {\em envy-free} matchings for the \HRLQ\ problem. We
propose a simple modification to Yokoi's algorithm to output a {\em maximal envy-free} matching. 
 We observe that popular matchings outperform envy-free matchings on several parameters of practical importance, like size,
number of blocking pairs, number of blocking residents.

In the absence of lower quotas, that is, in the Hospital Residents (\HR) problem, stable matchings
are guaranteed to exist. Even in this case,  we show that popularity is a practical alternative
to stability.
For instance,
on synthetic data-sets generated using a particular model, as well as on real world data-sets,
  a popular matching is on an average 8-10\% larger in size, matches more number
of residents to their top-choice, and more residents prefer the popular matching as compared to a stable matching. 
Our comprehensive study reveals the practical appeal of popular matchings for the \HR\ and \HRLQ\ problems.
To the best of our knowledge, this is the first study on the empirical evaluation of popular matchings in this setting.
\end{abstract}


\section{Introduction}

In this paper, we study two problems -- the Hospital Residents (\HR) problem and the Hospital Residents problem with Lower Quotas (\HRLQ).
The input to the \HR\ problem is a bipartite graph $G = (\RR \cup \HH, E)$ where $\RR$ denotes a set of residents, $\HH$ denotes a
set of hospitals, and an edge $(r, h) \in E$ denotes that $r$ and $h$ are acceptable to each other. 
Each vertex has a {\em preference list} which is a strict ordering on its neighbors.
Further, each hospital has a positive upper-quota $q^+(h)$. In the \HRLQ\ problem, additionally, a hospital has a non-negative lower-quota $q^-(h)$.
A {\em matching} $M$ is a subset of $E$ such
that every resident is assigned at most one hospital and every hospital is assigned at most upper-quota many residents. 
Let $M(r)$ denote the hospital to which resident $r$ is matched in $M$. Analogously,
let  $M(h)$ denote the set of residents that are matched to $h$ in $M$. 
A matching $M$ in an \HRLQ\ instance is {\em feasible} if for every hospital 
$h$, $q^-(h) \le |M(h)| \le q^+(h)$.
The goal is to compute a feasible matching that is {\it optimal} with respect to the preferences of
the residents and the hospitals. 
When both sides of the bipartition express preferences, 
{\it stability} is a well-accepted notion of optimality. A stable matching is defined by the absence of a {\it blocking pair}.
\begin{definition}\label{def:stab}
A pair $(r, h) \in E \setminus M$ blocks $M$ if either $r$ is unmatched in $M$ or $r$ prefers $h$ over $M(r)$  
and either $|M(h)|<q^+(h)$ or $h$ prefers $r$ over some $r' \in M(h)$. A matching $M$ is {\em stable} if there
does not exist any blocking pair w.r.t. $M$, else $M$ is unstable.
\end{definition}

From the seminal result of Gale and Shapley~\cite{GS62}, it is known that every instance of the \HR\ problem admits a stable matching
and such a matching can be computed in linear time in the size of the instance. In contrast, there exist simple instances of the \HRLQ\ problem which do not admit
any matching that is both {\em feasible and stable}.

\begin{figure}[h]
\begin{minipage}{0.5\linewidth}
\begin{center}
\begin{tabular}{cccc}
$\bf{r_1:}$&$h_1$&$ h_2$\\
$\bf{r_2:}$&$h_1$&$ h_2$\\
$\bf{r_3:}$&$h_1$\\
\end{tabular}
\end{center}
\end{minipage}
\begin{minipage}{0.45\linewidth}
\begin{center}
\begin{tabular}{ccccc}
$\bf{[0,2] \hspace{0.1in} h_1:}$ &$r_1$& $r_2$&$r_3$\\
$\bf{[1,1] \hspace{0.1in} h_2:}$ &$r_2$&$r_1$\\
\end{tabular}
\end{center}
\vspace{0.1in}
\end{minipage}
\caption{A hospital residents instance $G$ with 
$\RR = \{r_1, r_2, r_3\}$ and 
$\HH = \{h_1, h_2\}$. The quotas of the hospitals are $q^-(h_1) = 0, q^+(h_1) = 2,  q^-(h_2)  = q^+(h_2) = 1$. 
The preferences can be read from the tables as follows: $r_1$ prefers $h_1$ over $h_2$ and so on. 
The three matchings $M_1$, $M_2$ and $M_3$ are shown below.  $M_2$ and $M_3$ are feasible but unstable since $(r_2,h_1)$ blocks $M_2$ and $(r_1,h_1)$ blocks $M_3$.
  $M_1 = \{(r_1, h_1), (r_2, h_1)\} \hspace{0.2in} M_2 = \{(r_1, h_1), (r_2, h_2), (r_3, h_1)\} \hspace{0.2in} M_3 = \{(r_1, h_2), (r_2, h_1), (r_3, h_1)\}$.}
\label{fig:example-intro}
\vspace{-0.1in}
\end{figure}

Figure~\ref{fig:example-intro} shows an \HRLQ\ instance where $h_2$ has a lower-quota of $1$. The instance  admits a unique
stable matching $M_1$ which is not feasible. The instance admits two maximum cardinality matchings $M_2$ and $M_3$, both of which
are feasible but unstable. 
This raises the question what is an optimal feasible matching in the \HRLQ\ setting?

Our goal in this paper is to propose {\em popularity} as a viable option in the \HRLQ\ setting, and compare and contrast it by an extensive experimental
evaluation with other approaches proposed for this problem. Before giving a formal definition of popularity, we describe some approaches from the literature to the \HRLQ\ problem.

Hamada~et~al.~\cite{Hamada16} proposed the optimality notions 
(i) minimum number of blocking pairs (\MinBP) or (ii) minimum number of residents that participate in any blocking pair (\MinBR). 
Unfortunately, computing a matching which is optimal according to any of the two notions is \NPHard\ as proved by Hamada~et~al.~\cite{Hamada16}. On the positive side, they give
approximation algorithms for a special case of \HRLQ, complemented by matching inapproximability results. Their
approximation algorithms require that all the hospitals with non-zero lower-quota have {\em complete}
preference lists ({\em CL-restriction}).

There are two recent works \cite{NN17,Yokoi17} which circumvent the CL-restriction and work with the natural assumption 
that the \HRLQ\ instance admits some feasible matching. Nasre and Nimbhorkar~\cite{NN17} consider the notion of {\em popularity} and show how to compute a feasible matching which
is a maximum cardinality popular matching in the set of feasible matchings. 
Yokoi~\cite{Yokoi17} considers the notion of {\em envy-freeness} in the \HRLQ\ setting.
We define both popularity and envy-freeness below.
     
\noindent {\bf Popular matchings: }Popular matchings are defined based on comparison of two matchings with respect to votes of the participants. 
To define popular matchings,
first we describe how residents and hospitals vote between two matchings $M$ and $M'$. We use
the definition from \cite{NN17}.
For a resident $r$ unmatched in $M$, we set $M(r)=\bot$ and assume that $r$ {prefers to be matched to any
hospital in his preference list (set of acceptable hospitals) over $\bot$}. Similarly, for a hospital $h$ with capacity $q^+(h)$,
we set the positions $q^+(h)-|M(h)|$ in $M(h)$ to $\bot$, so that $|M(h)|$ is always equal to $q^+(h)$.
For a vertex $u\in \RR\cup \HH$, $vote_u(x,y)=1$ if $u$ prefers $x$ over $y$, $vote_u(x,y)=-1$ if $u$ prefers $y$
over $x$ and $vote_u(x,y)=0$ if $x=y$. Thus, for a resident $r$, $vote_r(M,M')=vote_r(M(r),M'(r))$.

\noindent {\bf Voting for a hospital:} A hospital $h$ is assigned $q^+(h)$-many
votes to compare two matchings  $M$ and $M'$; this can be viewed as one vote
per position of the hospital. A hospital is indifferent between $M$ and $M'$ as far as its
$|M(h)\cap M'(h)|$ positions are concerned. For the remaining positions of hospital $h$,
the hospital defines a function ${\bf corr}_h$. This function allows $h$ to decide
any pairing of residents in $M(h)\setminus M'(h)$ to residents in
$M'(h)\setminus M(h)$. Under this pairing, for a resident $r \in M(h)\setminus M'(h)$, ${\bf corr}_h(r,M,M')$ is the resident in $M'(h)\setminus M(h)$ corresponding to $r$.
Then $vote_h(M,M')=\sum_{r\in M(h)\setminus M'(h)}vote_h(r,{\bf corr}_h(r,M,M'))$.
As an example, consider $q^+(h) = 3$ and $M(h) = \{r_1, r_2, r_3\}$ and $M'(h) = \{r_3, r_4, r_5\}$. 
To decide its votes, $h$ compares between $\{r_1, r_2\}$ and $\{r_4, r_5\}$ and one possible ${\bf corr}_h$ is to pair  $r_1$ with $r_4$ and $r_2$ with $r_5$.
Another ${\bf corr}_h$ function is to pair $r_1$ with $r_5$ and $r_2$ with $r_4$. The choice of the pairing of residents using ${\bf corr}_h$ is determined by the  hospital $h$.
With the voting scheme as above, popularity can be defined as follows:
\begin{definition} \label{def:pop}
A matching $M$ is more popular than  $M'$ if $\sum_{u\in \RR\cup\HH} vote_u(M,M')> \sum_{u\in \RR \cup \HH} vote_u(M',M)$. A matching $M$ is popular if there is no matching $M'$ more popular
than $M$.
\end{definition}
It is interesting to note that the algorithms for computing popular matchings do not need the ${\bf corr}$ function
as an input. The matching produced by the algorithms presented in this paper is popular for {\em any} ${\bf corr}$
function that is chosen.

\begin{definition} Given a feasible matching $M$ in
an \HRLQ\ instance, a resident $r$ has {\em justified envy} towards $r'$ with $M(r') = h$ if $h$ prefers $r$ over $r'$ and $r$ is either unmatched or prefers $h$ over $M(r)$. 
A matching
is envy-free if there is no resident that has a justified envy towards another resident. 
\end{definition}
Thus envy-freeness is a relaxation of stability. 
A matching that is popular amongst feasible matchings always exists \cite{NN17}, but
there exist simple instances of the \HRLQ\ problem that admit a feasible matching but do not admit a feasible envy-free matching. Yokoi~\cite{Yokoi17}
gave a characterization of \HRLQ\ instances that admit an envy-free matching
and an efficient
algorithm to compute some envy-free matching.
               
An envy-free matching
need not even be {\em maximal.} In the example in Figure~\ref{fig:example-intro} the matching $M = \{(r_2, h_2)\}$ is
envy-free.
Note that $M$ is not maximal, and not even a {\em maximal envy-free matching}. 
The matching $M' = \{(r_1, h_1), (r_2, h_2)\}$ is a {maximal envy-free}
matching, since addition of any  edge to $M'$ violates the envy-free property. The matchings $M_2$ and $M_3$ shown
in Figure~\ref{fig:example-intro} are not envy-free,
since $r_2$ has
justified envy towards $r_3$ in $M_2$ whereas $r_1$ has a justified envy towards $r_3$ in $M_3$. 
The algorithm in ~\cite{Yokoi17} outputs the matching $M$ which need not even be maximal; Algorithm~\ref{algo:envy-free} in Section~\ref{sec:envy} outputs $M'$  which is guaranteed to be maximal envy-free. Finally,  the
Algorithm~\ref{algo:hrlq} in Section~\ref{sec:pop} outputs $M_2$
(see Figure~\ref{fig:example-intro}) which is both maximum cardinality as well as popular. 

Popularity is an interesting alternative to stability even in the absence of lower-quotas, that is, in the \HR\ problem.
The \HR\ problem is motivated by large scale real-world applications like the National Residency Matching Program (NRMP)
in US~\cite{nrmp} and the Scottish Foundation Allocation Scheme (SFAS) in Europe~\cite{sfas}.
In applications like NRMP and SFAS it is desirable from the social perspective to match 
as many residents as possible so as to reduce 
the number of unemployed residents and to provide better staffing to hospitals.
Bir{\`{o}}~et~al.~\cite{BiroMM10} report that for the SFAS data-set (2006--2007) the administrators were interested in knowing
if relaxing stability
would lead to gains in the size of the matching output.
It is known that the size of a stable matching could be half that of the maximum cardinality matching $M^*$,
whereas the size of a maximum cardinality popular matching is at least $\frac{2}{3}|M^*|$ (see e.g. \cite{Kavitha14}). Thus, theoretically, popular matchings give an attractive alternative to stable matchings. 

Popular matchings have gained lot of attention and there have been
several interesting results in the recent past ~\cite{AIKM07,BIM10,Cseh17,BrandlK16,Kavitha14,NN17,NR17}.
The algorithms used and developed in this paper, are inspired by a series
of papers \cite{BrandlK16,CsehK16,Kavitha14,NN17,NR17}.
One of the goals in this paper is
to complement these theoretical results with an extensive experimental evaluation of the quality
of popular matchings.
We now summarize our contributions below:

\subsection { Our contribution}
\noindent{\bf Results for \HRLQ:}
{\bf Algorithm for popular matchings:} We propose a variant of the algorithm in \cite{NN17}. 
Whenever the input \HRLQ\ instance admits a feasible matching, our algorithm outputs a (feasible) popular matching amongst
all feasible matchings. We report the following experimental findings:
\begin{itemize}
\item {\em \MinBP\ and \MinBR\ objectives: }
In all the data-sets, the matching output by our algorithm is { at most $3$ times the optimal} for 
the \MinBP\ problem and { very close to optimal} for the \MinBR\ problem.
In the \HRLQ\ setting, no previous approximation algorithm is known for incomplete preference lists. 
The only known algorithms which output matchings with theoretical guarantees for the \MinBP\ and \MinBR\ problems work under the CL-restricted model~\cite{Hamada16}.
We remark that
the algorithm of \cite{NN17} can not provide any bounded approximation ratio for \MinBP\ and \MinBR\ 
problems as it may output a feasible matching with non-zero blocking pairs and non-zero
blocking residents even when the instance admits a stable feasible matching. 

\item {\em Comparison with envy-free matchings: }Based on the algorithm by Yokoi~\cite{Yokoi17}, we give an algorithm to compute a maximal 
envy-free matching, if it exists, and 
compare its {\em quality} with a popular matching. 
Besides the fact that popular matchings exist whenever feasible matchings exist, which is not the case with envy-free matchings, our experiments
illustrate that, compared to an envy-free matching, a popular matching is about {32\%--43\%} larger, matches about {15\%--38\%}
more residents to their top choice and has { an order of magnitude fewer} blocking pairs and blocking residents.
\end{itemize}

\noindent{\bf Empirical Evaluation of known algorithms for  \HR\ :} For the \HR\ problem, we implement and extensively evaluate known algorithms for maximum cardinality popular matching and popular
amongst maximum cardinality matching
on
synthetic data-sets as well as limited number of real-world data-sets.
\begin{itemize}
\item Our experiments show that 
a maximum cardinality popular matching, as well as popular amongst maximum cardinality matchings are 8--10$\%$ larger in size than a stable matching. 
We also observe that a maximum cardinality popular matching almost always fares better than a stable matching with respect to the
 number of residents 
matched to their rank-1 hospitals, and the number of residents favoring a maximum cardinality popular matching over a stable matching. 
\end{itemize}
We note that these properties cannot be proven theoretically as there are instances where they do not hold.  
Despite these counter-examples (see Appendix~\ref{sec:app_example})
our empirical results show that the desirable properties hold on synthetic as well as real-world instances. 
Hence we believe that popular matchings are a very good practical alternative.

\noindent {\bf Organization of the paper:} 
In Section~\ref{sec:pop}, we present our algorithm
to compute a matching that is popular amongst feasible matchings for the \HRLQ\ problem. 
Our algorithm for computing a maximal envy-free matching is described in Section \ref{sec:envy}.
In Section~\ref{sec:setup} we give details of our experimental setup and the data-generation models developed by us. Section~\ref{sec4} gives our empirical results for the \HRLQ\ and \HR\ problems. 
We refer the reader to \cite{BrandlK16,NR17} for known algorithms for computing popular matchings in the \HR\ problem. 


\section{Algorithms for \HRLQ\ problem}
In Section \ref{sec:pop}, we 
present our algorithm to compute a popular matching in the \HRLQ\ problem. As mentioned earlier, our algorithm
is a variant of the algorithms in ~\cite{NN17}. The algorithm to find a maximal envy-free matching is given in Section \ref{sec:envy}, and its
output is a superset of the envy-free matching
computed by Yokoi's algorithm \cite{Yokoi17}.

\subsection{Algorithm for Popular matching in \HRLQ}\label{sec:pop}
In this section, we present an algorithm (Algorithm~\ref{algo:hrlq}) that outputs a feasible matching $M$ in a given \HRLQ\ instance $G$, 
such that $M$ is popular amongst all the feasible matchings in $G$.
We assume that $G$ admits a feasible matching, which can
be checked in polynomial time~\cite{Hamada16}.

Algorithm~\ref{algo:hrlq} is a modification of
the standard hospital-proposing Gale and Shapley algorithm~\cite{GS62} and can be viewed as a 
two-phase algorithm.
The first phase simply computes a stable matching $M_s$ in the input instance ignoring lower quotas.
If $M_s$ satisfies the lower quotas of all the hospitals, it just outputs $M_s$. Otherwise
hospitals that are {\it deficient} in $M_s$ propose with {\it increased priority}. A hospital $h$ is deficient w.r.t. 
a matching $M$
if $|M(h)| < q^-(h)$. 
We implement the increased priority by assigning a {\em level}
to each hospital, so that higher level corresponds to higher priority. A resident always prefers a higher level hospital in its preference list to a lower level hospital, 
irrespective of their relative positions in its preference list.
We show that at most $|\RR|$ levels suffice to output a feasible matching in the instance, if one exists. We give a detailed description below.

\begin{algorithm}[ht]
\footnotesize
\begin{algorithmic}[1]
\State Input : $G = (\RR \cup \HH, E)$
\State set $M = \emptyset$; $Q = \emptyset$; \label{init}
\For {each $h \in \HH$} \label{hos-levels}
	set level($h$) = 0; add $h$ to $Q$;
\EndFor
\For {each $r \in \RR$} \label{res-levels}
	set level($r$) = -1;
\EndFor

\While {$Q$ is not empty} \label{while}
\State let $h =$ head($Q$);\label{GS-beg}
\State let $\mathcal{S}_h$ = residents to whom $h$ has not yet proposed with level($h$);
\If {$\mathcal{S}_h \neq \emptyset$ } \label{Sh-check}
        \State let $r$ be the most preferred resident in $\mathcal{S}_h$ \label{propose}
        \If {$r$ is matched in $M$ (to say $h'$)} \label{check}
		\If { {\bf pref}($r, h, h'$) == 0}

			\State add $h$ to $Q$;
			 goto Step~\ref{while};
        	\Else

                	\State $M = M \setminus \{(r, h')\}$; \label{remove}
			\State add $h'$ to $Q$ if $h'$ is not present in $Q$; \label{add-h-prime}
		\EndIf 
	\EndIf
        \State $M = M \cup \{(r, h)\}$;  level($r$) = level($h$); \label{matchr}
        \If { level($h$) == 0 and $|M(h)|< q^+(h)$} \label{all-h}

                \State add $h$ to $Q$;
        \ElsIf { $h \in \HH_{lq}$ and $|M(h)| < q^-(h)$} \label{def-h}
                \State add $h$ to $Q$;
        \EndIf
\ElsIf  { $h \in \HH_{lq}$ and level($h$) < $|\RR|$} \label{inc-level}
	\State level($h$) = level($h$)+1; add $h$ to $Q$;
	\State // $h$ starts proposing from the beginning of the preference list.
\EndIf
\EndWhile
\end{algorithmic}
\caption{Popular matching in \HRLQ}
\label{algo:hrlq}
\end{algorithm}

\noindent Let $G = (\RR \cup \HH, E)$ be the given \HRLQ\ instance. 
 Let $\HH_{lq}$ denote the set of hospitals
which have non-zero lower quota -- we call $h \in \HH_{lq}$ a lower-quota hospital; we call $h \notin \HH_{lq}$ a non-lower quota hospital.

Algorithm~\ref{algo:hrlq} begins by initializing the matching $M$ and a queue $Q$ of hospitals to be empty (Step~\ref{init}). As described 
above, level of a hospital denotes its current priority, and initially all the hospitals have
level $0$. All hospitals are added to $Q$ (Step~\ref{hos-levels}).
We also assign a level to each resident $r$, which stores the level of the hospital 
$M(r)$ at the time when $r$ gets matched to $M(r)$. Initially, all residents are assigned level $-1$ (Step~\ref{res-levels}). 
The main while loop of the algorithm (Step~\ref{while}) executes as long as $Q$ is non-empty. 
Steps~\ref{GS-beg}--\ref{matchr} are similar to  the execution of the hospital-proposing Gale-Shapley algorithm~\cite{GI89}.
When a matched resident $r$ gets a proposal from a hospital $h$, $r$ compares its current match $h'$ with  $h$ using {\bf pref}$(r,h,h')$.
We define 
{\bf pref} $(r,h,h')=1$ if 
$(i)$ level$(r) < $ level$(h)$, which means $h$ proposes to $r$ with a higher priority than $h'$ did, or $(ii)$ level$(r) = $ level$(h)$ and $h$ has a higher position than $h'$
in the preference list of $r$. 
We define 
{\bf pref} $(r,h,h')=0$ otherwise.
Thus matched resident $r$ accepts the proposal of $h$ and rejects $h'$ if and only if {\bf pref} $(r,h,h')=1$.
In case the edge $(r, h)$ is added to $M$ (Step~\ref{matchr}), the algorithm also sets the level of $r$ equal to the level($h$). 

When a hospital $h$ finishes proposing all the residents on its preference list with level$(h)=i$ and
still $|M(h)|<q^-(h)$, level$(h)$ is incremented by $1$ and $h$ is added back to $Q$.
 This is done in Step~\ref{inc-level}.
Now $h$ restarts proposing all the residents in its preference list with the new value of
level$(h)$.

\noindent {\bf Running time and correctness:} To see that the algorithm terminates we observe that every non-lower quota hospital
proposes to residents in its preference list at most once. Every lower-quota hospital proposes to residents
on its preference list at most $|\RR|$ times. Thus the running time of our algorithm is $O((|\RR|+|\HH| + |E|) \cdot |\RR|)$.
To see the correctness note that  when $G$ admits a feasible stable matching  our algorithm degenerates to the standard Gale-Shapley hospital
proposing algorithm. As proved in \cite{NR17},
a stable matching is popular amongst the set of feasible matchings.
In case $G$ admits a feasible matching but no feasible stable matching,   techniques as in \cite{NN17} can be employed to show that the output
is popular amongst the set of feasible matchings.
\begin{theorem}
In an \HRLQ\ instance $G$, Algorithm~\ref{algo:hrlq} outputs a matching that is feasible and popular amongst the set of feasible matchings.
If $G$ admits a feasible and stable matching, then Algorithm~\ref{algo:hrlq} outputs a stable matching.
\vspace{-0.1in}
\end{theorem}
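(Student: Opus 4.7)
The proof splits naturally into four claims: termination of Algorithm~\ref{algo:hrlq}, feasibility of the output $M$ whenever $G$ admits a feasible matching, reduction of the algorithm to the standard Gale--Shapley execution when $G$ admits a feasible stable matching, and popularity of $M$ amongst feasible matchings. I would dispose of them in this order.

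Termination and the stable-matching case are essentially bookkeeping. A non-lower-quota hospital never increases its level and therefore proposes to each resident at most once; a lower-quota hospital has at most $|\RR|+1$ levels and proposes to each resident at most once per level, giving the claimed running time. When $G$ admits a feasible stable matching, the Rural Hospitals theorem for the underlying \HR\ instance implies that \emph{every} stable matching (in particular, the hospital-optimal one $M_s$ produced by Phase~1) matches each hospital to the same number of residents; hence $M_s$ is itself feasible, no hospital is deficient, the test in Step~\ref{def-h} never triggers, no level is ever incremented, and the algorithm returns $M_s$.

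For feasibility in general, I would prove the contrapositive: if some lower-quota hospital $h^*$ is still deficient when $\mbox{level}(h^*) = |\RR|$, then $G$ admits no feasible matching. The key invariant maintained by the algorithm is that whenever a resident $r$ is matched at level $\ell$, no hospital at level strictly greater than $\ell$ has a vacancy and no hospital at level $\ell$ that $r$ prefers to $M(r)$ has a vacancy. So if $h^*$ has proposed to its entire preference list at the maximum level without being filled, a counting/augmenting argument in the spirit of Hamada et al.\ and \cite{NN17} shows that the residents available to the lower-quota hospitals are collectively too few to satisfy their lower quotas, contradicting feasibility.

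The main obstacle is proving popularity amongst feasible matchings, and here I would follow the LP-duality template of \cite{NN17,NR17}. The plan is to construct an integer-valued witness $\alpha$ on $\RR\cup\HH$, with $\alpha_h$ essentially equal to $-\mbox{level}(h)$ (one such value per position of $h$) and $\alpha_r$ encoding the level at which $r$ was matched together with a $\pm 1$ correction for within-level preference; unmatched residents receive $\alpha_r = 0$. Two conditions then have to be verified: a global inequality $\sum_u \alpha_u \le 0$ (summed with multiplicity $q^+(h)$ over the hospitals) on the matching $M$, and, for every edge $(r,h)\in E$, an edge inequality saying that the contribution of $(r,h)$ to $\mathrm{vote}(M',M)-\mathrm{vote}(M,M')$ is at most $\alpha_r+\alpha_h$ for every feasible $M'$ containing $(r,h)$. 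The global inequality is a direct evaluation; the edge inequality is the technical heart. I would split it into cases according to whether $h$ has ever proposed to $r$, whether $r$'s rejection of $h$ was due to a higher-level hospital or to an equal-level hospital ranked above $h$ in $r$'s list, and whether $h$ itself is a deficient lower-quota hospital. Each case closes using the proposal-level invariants already used for feasibility, and the argument is robust to the arbitrary choice of ${\bf corr}_h$ because the per-position votes of $h$ telescope cleanly against the per-position values of $\alpha_h$. Summing the edge inequalities over any feasible $M'$ and combining with the global inequality yields $\mathrm{vote}(M,M')\ge\mathrm{vote}(M',M)$, establishing popularity.
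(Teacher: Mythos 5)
Your proposal follows essentially the same route as the paper, which itself only sketches the argument: termination by counting proposals per level, reduction to hospital-proposing Gale--Shapley when a feasible stable matching exists (with popularity of stable matchings cited from \cite{NR17}), and popularity in the general case via the techniques of \cite{NN17}, i.e.\ exactly the level-based dual-witness construction you outline. Your write-up is in fact more detailed than the paper's, which defers the feasibility and edge-inequality case analysis entirely to the cited works.
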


\subsection{Algorithm for Maximal Envy-free Matching}\label{sec:envy}
Yokoi \cite{Yokoi17} has given an algorithm to compute an envy-free matching in an \HRLQ\ instance $G=(\RR\cup\HH,E)$.
Yokoi's algorithm works in the following steps: 
\begin{enumerate}
\item Set the upper quota of each hospital to its lower quota.
\item Set the lower quota of each hospital to $0$, call this modified instance $G_1$.
\item Find a stable matching $M_1$ in the modified instance.
\item Return $M_1$, if every hospital gets matched to as many residents as its upper quota in $G_1$, otherwise declare that there is no envy-free matching in $G$.
\end{enumerate}

It can be seen that Yokoi's algorithm, as stated above, does not return a {\em maximal envy-free} matching.
In particular, any hospital with lower-quota $0$ does not get matched to any resident in Yokoi's algorithm. Hence we
propose a simple extension of Yokoi's algorithm to compute a maximal envy-free matching, which contains the matching output by Yokoi's algorithm.
 We call a matching $M$ in $G$ maximal envy-free if, for any 
edge $e=(r,h)\in E\setminus M$, $M\cup  \{e\}$ is not envy-free. The following definition with respect to Yokoi's output matching $M_1$ 
 is useful for our algorithm.
 Our algorithm is described
as Algorithm~\ref{algo:envy-free}.
\begin{definition}
Let $h$ be a hospital that is under-subscribed in $G$ with respect to $M_1$, that is $|M_1(h)| < q^+(h)$. 
A {\em threshold resident} $r_h$ for $h$, if one exists, is the most preferred  resident  in the preference list of $h$ 
such that
$(i)$ $r_h$ is matched in $M_1$, to say $h'$, and $(ii)$ $r_h$ prefers $h$ over $h'$. If no such resident exists,
we assume a unique dummy resident $r_h$ at the end of $h$'s preference list to be the threshold resident for $h$.
\end{definition}
\begin{algorithm}[ht]
\footnotesize
\begin{algorithmic}[1]
\State Input : $G = (\RR \cup \HH, E)$
\State Compute a matching $M_1$ by Yokoi's algorithm.
\If{Yokoi's algorithm declares ``no envy-free matching"}
\State Return $M=\emptyset$.
\EndIf
\State Let $\RR'$ be the set of residents unmatched in $M_1$.
\State Let $\HH'$ be the set of hospitals such that $|M_1(h)|<q^+(h)$ in $G$.
\State Let $G'=(\RR'\cup\HH',E')$ be an induced subgraph of $G$, where $E' =
\{(r,h)\mid r\in\RR', h\in\HH',  \textrm{$h$ prefers $r$ over its  threshold resident $r_h$} \}$. Set $q^+(h)$ in $G'$ as $q^+(h) - q^-(h)$ in $G$.

\State Each $h$ has the same relative ordering on its neighbors in $G'$ as in $G$. 
\State\label{step:GS}$M_2$ =  stable matching in $G' = (\RR' \cup \HH', E')$.  
\State Return $M=M_1\cup M_2$.
\end{algorithmic}
\caption{Maximal envy-free matching in \HRLQ}\label{algo:envy-free}
\end{algorithm}

\noindent Below we prove that the output of Algorithm~\ref{algo:envy-free} is a maximal envy-free matching.
\begin{theorem}\label{thm:envy}
If $G$ admits an envy-free matching, then Algorithm~\ref{algo:envy-free} outputs $M$ which is maximal envy-free in $G$.
\end{theorem}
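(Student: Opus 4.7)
The plan is to establish two things: (i) $M = M_1 \cup M_2$ is a feasible envy-free matching in $G$, and (ii) adding any $e \in E \setminus M$ either destroys the matching property or breaks envy-freeness. The assumption that $G$ admits an envy-free matching ensures Yokoi's procedure succeeds, giving $|M_1(h)| = q^-(h)$ for every $h$. A preliminary observation used throughout is that $M_1$ is already envy-free in $G$: any justified-envy witness in the original instance towards an $r'' \in M_1(h)$ would be a blocking pair of $M_1$ in Yokoi's modified instance (where $q^+$ is lowered to $q^-$), contradicting stability there. Feasibility of $M$ then follows because $|M_1(h)| = q^-(h)$, $|M_2(h)| \le q^+(h) - q^-(h)$, and $M_1, M_2$ match disjoint resident sets ($\RR \setminus \RR'$ and $\RR'$).

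For the envy-freeness half of (i), suppose $r'$ has justified envy in $M$ towards $r'' \in M(h)$. If $r'' \in M_1(h)$, then the same witness survives in $M_1$ (residents in $\RR \setminus \RR'$ keep their $M_1$-match in $M$, while residents in $\RR'$ are unmatched in $M_1$), contradicting envy-freeness of $M_1$. If $r'' \in M_2(h)$, then membership of $r''$ in $M_2 \subseteq E'$ forces $h$ to prefer $r''$ over its threshold $r_h$, and combined with $h$ preferring $r'$ over $r''$ we get that $h$ prefers $r'$ over $r_h$. If $r' \notin \RR'$, then $r'$ is matched by $M_1$ and prefers $h$ over $M_1(r')$ by the envy hypothesis, making $r'$ a threshold candidate ranked by $h$ above $r_h$, contradicting the defining maximality of $r_h$. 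If $r' \in \RR'$, then $(r', h) \in E'$, and $(r', h)$ blocks the stable matching $M_2$ in $G'$, a contradiction.

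For (ii), take $(r, h) \in E \setminus M$ with $M \cup \{(r, h)\}$ still a matching. Then $r$ must be unmatched in $M$, so $r \in \RR'$, and $|M(h)| < q^+(h)$, so $h \in \HH'$ and $|M_2(h)| < q^+(h) - q^-(h)$. If $(r, h) \in E'$, then $(r, h)$ blocks $M_2$ in $G'$, contradicting stability of $M_2$. Otherwise $(r, h) \notin E'$ forces $h$ to prefer a real threshold $r_h$ over $r$ (the dummy-threshold case cannot occur, since then every edge from $\RR'$ into $h$ would lie in $E'$); by definition $r_h$ is matched by $M_1$ to a hospital it ranks strictly below $h$, so $r_h$ exhibits justified envy towards $r$ in $M \cup \{(r, h)\}$, destroying envy-freeness.

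The crux, and the step where the design of the algorithm must be justified carefully, is the cross-case in the envy-freeness argument in which an $M_1$-matched resident envies an $M_2$-matched resident. This is precisely where the definition of $E'$ (admitting $(r, h)$ only when $h$ prefers $r$ over its threshold) does the work, and the same threshold then does dual duty in the maximality step by providing a ready-made envy witness for any addable edge that lies below the threshold. The remaining ingredients --- feasibility of $M$, the dummy-threshold boundary case, and lifting $M_1$'s envy-freeness from the modified to the original instance --- are routine once the threshold machinery is in place.
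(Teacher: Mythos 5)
Your proof is correct and follows essentially the same route as the paper's: envy-freeness of $M$ by splitting on whether the envied resident lies in $M_1$ or $M_2$ (with the threshold resident $r_h$ resolving the cross-case), and maximality by splitting on whether the addable edge lies in $E'$ (blocking $M_2$) or below the threshold (so $r_h$ becomes the envy witness). The extra details you supply --- the explicit feasibility check, the preliminary lemma that $M_1$ is envy-free in $G$, and the dummy-threshold boundary case --- are all consistent with, and implicit in, the paper's argument.
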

\begin{proof}
Since $G$ admits an envy-free matching, $M_1$ output by Yokoi's algorithm is non-empty.
We prove that $M$ is an envy-free feasible matching, and for each edge $e \in E\setminus M$, $M\cup\{e\}$ is not an envy-free matching.

\noindent {\bf $M$ is envy-free:}
Assume, for the sake of contradiction, that a resident $r'$ has a justified envy towards a resident $r$ with respect to $M$. 
Thus $r'$ prefers $h=M(r)$ over $h'=M(r')$, and $h$ prefers $r'$ over $r$. The edge $(r, h)\in M$ and
hence either $(r,h)\in M_1$ or $(r,h)\in M_2$. 
Suppose $(r, h) \in M_1$. Recall that $M_1$ is stable in the instance $G_1$ used in Yokoi's algorithm.
In this case, the edge $(r', h)$ blocks $M_1$ in $G_1$
a contradiction to the stability of $M_1$ in $G_1$. 

Thus, if possible, let  $(r, h)\in M_2$, and hence $(r,h)\in E'$. If $r'$ is unmatched in $M_1$, then $(r',h)\in E'$ and 
$(r', h)$ blocks $M_2$ in $G'$, contradicting the stability of $M_2$ in $G'$.
If $r'$ is matched in $M_1$ then $r'\notin\RR'$ and hence $(r',h)\notin E'$. In this case, the threshold resident $r_h$ of $h$ is either same as $r'$ or
is a resident whom $h$ prefers over $r'$. Since $h$ prefers $r'$ over $r$, $h$ prefers $r_h$ over $r$. Therefore $(r,h)$ can not be in $E'$ by construction, and 
hence $(r,h)\notin M_2$.
This proves that $M$ is envy-free. 

\noindent {\bf $M$ is maximal envy-free:} We now prove that, for any $e=(r, h)\notin M$, $M\cup\{e\}$ is not envy-free.
Let $e=(r, h), h\in\HH,r\in\RR$. Clearly, $h$ must be {\em under-subscribed in $M$} i.e. $|M(h)|<q^+(h)$, and $r$ must be unmatched in $M$, 
otherwise $M\cup\{e\}$ is not a valid matching in $G$.
Let $r_h$ be the threshold resident of $h$ with respect to $M_1$. Since $r$ is unmatched in $M$ and hence in $M_1$, $r\in \RR'$.
 (i) If $h$ prefers $r$ over $r_h$, then 
$(r, h)\in E'$ blocks $M_2$ in $G'$,
 which contradicts  the stability of  $M_2$ in $G'$. 
(ii) If $h$ prefers $r_h$ over $r$, then adding the edge $(r, h)$ to $M$
makes $r_h$ have a justified envy towards $r$. Thus,  $M\cup\{(r, h)\}$ is not envy-free. 
\end{proof}


\section{Experimental Setup}
\vspace{-0.1in}
\label{sec:setup}
The experiments were performed on a machine with a single
Intel Core i7-4770 CPU running at 3.40GHz, with 32GB of DDR3 RAM at 1600MHz.
The OS was Linux (Kubuntu 16.04.2, 64 bit) running kernel 4.4.0-59.
The {code}~\cite{code} 
is developed in C++ and was compiled using the clang-3.8 compiler with -O3 optimization level.
To evaluate the performance of our algorithms, we developed data generators
which model preferences of the participants in real-world instances.
We use a limited number of publicly available data-sets as well
as real-world data-sets from elective allocation at IIT-Madras.
All the data-sets generated and used by us are available at \cite{data}. 
\vspace{-0.1in}
\subsection{Data generation models and available data-sets}

There are a variety of parameters and all of them could be varied to generate synthetic data-sets.
We focus on four prominent parameters -- 
$(i)$ the number of residents $|\RR|$, 
$(ii)$ the number of hospitals $|\HH|$,
$(iii)$ the length of the preference list of each resident $k$, 
$(iv)$ capacity of every hospital $cap$ (by default $cap = |\RR| / |\HH|$).
In the synthetic data-sets, all  hospitals have uniform capacity and all residents have the same length of preference list.
We use the following three models of data generation,
and data-sets publicly available from~\cite{public}. 
\begin{enumerate}
\item {\bf Master}:
\label{subsec:mahdian_with_master}
Here, we model the real-world scenario that there are some hospitals which are in high demand
among residents and hence have a larger chance of appearing in the preference
list of a resident. 
Hospitals on the other hand, rely on some global criteria to rank residents. 
We set up a geometric probability distribution with $p=0.10$
over the hospitals which denotes the probability with which a hospital is
chosen for being included in the preference list of a resident.
Each resident samples $k$ hospitals according to the distribution and orders them arbitrarily.
We also assume that there exists a master list over the set of residents.
For all the neighbours of a hospital, the hospital ranks them according to 
the master list of residents.
\item {\bf Shuffle}:
\label{subsec:mahdian_with_shuffle}
This model is similar to the first model except that we do not assume a master
list on the residents. The residents draw their preference lists as above.
Every hospital orders its neighbours uniformly at random. 
This models the scenario that hospitals may have custom defined ranking criteria.
Our \Shuffle model is closely inspired by the one described by Mahdian and Immorlica~\cite{IM05}.

\item {\bf Publicly available HR with couples data-set}:
\label{subsec:hr_couple}
Other than generating data-sets using the three models described above, we used
a freely available data-set~\cite{public} by Manlove~et~al.~\cite{ManloveMT2017}
for the HR problem with couples (HRC problem).
The instances were only modified with respect to the preference list of
residents that participate in a couple, all other aspects of the instance
remain the same.
Residents participating in a couple can have different copies of a same
hospitals on their preference list which are not necessarily contiguous.
The preference list is created by only keeping the first unique copy of a hospital in the
preference list of a resident.

\item {\bf Elective allocation data from IIT-M}:
\label{subsec:iitmdata}
We use a limited number of real-world data-sets available from the IIT Madras elective
allocation. The data-sets are obtained from the SEAT (Student Elective Allocation Tool) which
allocates humanities electives and outside department electives for under-graduate students across  the institute every semester. The input consists of
a set of students and a set of courses with capacities (upper-quotas). Every student submits a preference ordering
over a subset of courses and every course ranks students based on custom ranking criteria.
This is exactly the HR problem.
\end{enumerate}

\REM{
\noindent {\bf Relation to other models}: 
Our \Shuffle model is closely inspired by the one described by Mahdian and Immorlica~\cite{IM05}.
There they work with a stable marriage setting and they assume an arbitrary
distribution on women (instead of the geometric distribution on hospitals in our case).
Manlove~et~al.~\cite{IrvingM2010} consider the problem of computing {\it large weakly stable} matchings
in an HR instance with ties allowed in the preference lists.
In their data-sets~\cite{IrvingM2010}, as well as synthetic data-sets generated by us, the number of
residents is fixed to be equal to the number of {\it positions} (equal to $n_1 \cdot cap$)
and the resident preference list of length is fixed to a constant value $5$.
Manlove~et~al.~\cite{ManloveMT2017} studied the problem of
finding a matching with the minimum number of blocking pairs in the case of
HR problem with couples.
The model used by them is similar to the \Shuffle model used
by us and we also use some of their publicly available data-sets.

}
\noindent {\bf \HRLQ\ instances:} The above three data-generation models (Master, Shuffle, and Random) are common for generating preferences in \HRLQ\ and \HR\ data-sets.
For \HRLQ\ data-sets we  additionally need lower quotas.
In all our data-sets, around 90\% of the hospitals had lower quota at least 1.
This is to ensure that the instances are \HRLQ\ instances rather than {\it nearly} HR instances.
Also, the sum of the lower quota of all the hospitals was kept around 50\% of
the total number of positions available.
This ensures that at least half of the residents must be matched to a
lower quota hospital.
Lastly, we consider only those \HRLQ\
instances which admit a feasible matching but {\em no stable feasible matching}.
This is done by simply discarding instances that admit a feasible stable matching.
We discard such instances because if an instance with feasible stable matching, the stable matching is optimal with respect to popularity, envy-freeness and 
min-BP and min-BR objectives.

\noindent{\bf Methodology}: For reporting our results, we fix a model of generation and
the parameters $|\RR|, |\HH|, k, cap$. For the chosen model and the parameters, we generate 10
data-sets and report arithmetic average for all output parameters on these data-sets.

\vspace{-0.1in}

\section{Empirical Evaluation}
\label{sec4}
Here, we present  our empirical observations on \HRLQ\ and \HR\ instances. In each case, we define
set of parameters on the basis of which we evaluate the quality of our matchings.
\vspace{-0.1in}
\subsection { \HRLQ\ instances}
In this section we show the performance of Algorithm~\ref{algo:hrlq} and Algorithm~\ref{algo:envy-free}
on data-sets generated using models described earlier. 
Let $G$ be an instance of the \HRLQ\ problem, and $M_s$ be the stable matching in the instance ignoring lower-quotas. For all our instances
$M_s$ is infeasible. Let $M_p$ denote  a popular matching output 
Algorithm~\ref{algo:hrlq} in $G$. Let $M_e$ denote a maximal envy-free matching output by Algorithm~\ref{algo:envy-free}.
Both $M_p$ and $M_e$ are feasible for $G$ and hence unstable.

\noindent {\bf \underline{Parameters of interest:}}
 We now define the parameters of the matching that are of interest in the \HRLQ\ problem.
For $M \in \{M_p, M_e\}$ we compute the following. 
\begin{itemize}
  \item $S(M):$ size of the matching $M$ in $G$.
  \item $BPC(M):$ number of blocking pairs w.r.t. $M$ in $G$.
    Since $M$ is not stable in $G$, this parameter is expected to be positive; however we
would like this parameter to be small. 
  \item $BR(M):$ number of residents that participate in at least one blocking
    pair w.r.t. $M$.
  \item $\RR_1(M):$ number of residents matched to their rank-1 hospitals in $M$.
\end{itemize}
We additionally compute the deficiency of the stable matching $M_s$.
Hamada~et~al.~\cite{Hamada16} showed that $Def(M_s, G)$ is a lower bound on the number of blocking pairs and the
number of blocking residents in an \HRLQ\ instance. 
To analyze the goodness of $M_p$ and $M_e$ w.r.t. the \MinBP\ and \MinBR\ objectives, we compare the number of
blocking pairs and blocking residents with the lower bound of $Def(M_s, G)$. 
\begin{itemize}
  \item $Def(M_s, G):$ This parameter denotes the deficiency of the stable (but not feasible) matching $M_s$ in $G$.  For every hospital $h$,
let  $def(M_s, h) =\max\{ 0,  q^-(h) - |M_s(h)|\}$. The deficiency of the instance $G$ is the sum of  the  deficiencies of all hospitals.
\end{itemize}

\noindent We now describe our observations for the data-sets generated using
various models.
For a particular model, we vary the parameters $|\RR|, |\HH|, k, cap$ to
generate \HRLQ\ data-sets using the different models.
In all our tables a column with the legend $\uparrow$ implies that larger values are better. Analogously, a column with the legend $\downarrow$ implies
smaller values are better.
\subsubsection{Popular Matchings versus Maximal Envy-free Matchings}
Here we report the quality of popular matchings and envy-free matchings on the parameters of interest listed above
on different models.

\begin{table}[h]

\begin{tabular}{|c||c||c|c||c|c||c|c||c|c|}
\hline
$|\HH|$& $Def(M_s, G)$& \multicolumn{2}{|c|}{$S(M)$ {\bf $\uparrow$}} & \multicolumn{2}{|c|}{$BPC(M)$ $\downarrow$} & \multicolumn{2}{|c|} {$BR(M)$ $\downarrow$} & \multicolumn{2}{|c|}{$\RR_1(M)$ $\uparrow$} \\
\hline
\hline
 &  & $M_p$ & $M_e$ & $M_p$ & $M_e$ & $M_p$ & $M_e$ & $M_p$ & $M_e$ \\

\hline
100 & 30.80 & {\bf 885.40} &559.00 & {\bf 78.50} & 2747.00 & {\bf 34.80} & 822.00 & {\bf 554.10 }&174.00  \\
20 & 23.60 & {\bf 897.90} & 510.66 &  {\bf 67.70} &3067.33 & {\bf  27.50} & 803.66  &{\bf  570.40} & 195.33 \\
10 & 27.50   &{\bf  912.80} & 535.50 & {\bf 85.10} &2945.00 &{\bf  31.10} & 770.87 &{\bf  600.40} & 226.87 \\
\hline
\end{tabular}
\vspace{0.1in}
\caption{Data generated using the \Master model. All values are absolute.
$|\RR| = 1000, k = 5$.} 
\label{tab:HRLQ-master-1000}
\end{table}
Table~\ref{tab:HRLQ-master-1000} shows the results for  popular matchings ($M_p$) and maximal envy-free matchings ($M_e$) on data-sets generated using the \Master model.

Table~\ref{tab:HRLQ-shuffle-1000} shows the results for  popular matchings ($M_p$) and maximal envy-free matchings ($M_e$) on data-sets generated using the \Shuffle model.

\begin{table}[h]

\begin{tabular}{|c||c||c|c||c|c||c|c||c|c|}
\hline
$|\HH|$& $Def(M_s, G)$& \multicolumn{2}{|c|}{$S(M)$ {\bf $\uparrow$}} & \multicolumn{2}{|c|}{$BPC(M)$ $\downarrow$} & \multicolumn{2}{|c|} {$BR(M)$ $\downarrow$} & \multicolumn{2}{|c|}{$\RR_1(M)$ $\uparrow$} \\
\hline
\hline
 &  & $M_p$ & $M_e$ & $M_p$ & $M_e$ & $M_p$ & $M_e$ & $M_p$ & $M_e$ \\

\hline
100 & 17.00 & {\bf 892.70} & -- & {\bf 27.20} & -- & {\bf 19.40} & --  & {\bf 350.30 }&  --\\
20 & 20.60 & {\bf 915.30} & 547.00  &  {\bf 34.40} & 2838.20 & {\bf  23.60} & 808.00  &{\bf  343.90} & 185.80 \\
10 & 35.40   &{\bf  930.00} & 490.33  & {\bf 57.80} & 3388.00 &{\bf  35.40} & 853.00  &{\bf  309.30} &  147.00 \\
\hline

\end{tabular}

\vspace{0.1in}
\caption{Data generated using the \Shuffle model. All values are absolute.
$|\RR| = 1000, k = 5$.} 
\label{tab:HRLQ-shuffle-1000}
\vspace{-0.3in}
\end{table}

We observe the following from the above two tables.
\begin{itemize}
\item {\bf Guaranteed existence:} As noted earlier, envy-free matchings are not guaranteed to exist in contrast to popular matchings which
always exist in \HRLQ\ instances.
For instance, in the \Shuffle model, for  $|\RR| = 1000, |\HH| = 100, k=5$ (Table~\ref{tab:HRLQ-shuffle-1000}, Row~1) none of the instances 
admit an envy-free matching.
Thus, for the columns $M_e$ we take an average over the instances that admit an envy-free matching.
\item {\bf Size:} It is evident from the tables that in terms of 
size popular matchings are about {32\%--43\%}  larger as compared to envy-free matchings 
(when they exist). See Column $S(M)$ in Table~\ref{tab:HRLQ-master-1000} and Table~\ref{tab:HRLQ-shuffle-1000}.
\item {\bf BPC and BR:} In terms of the blocking pairs and blocking residents, popular matchings beat envy-free matchings
by over an order of magnitude. We remark that to ensure envy-freeness, several hospitals may to be left under-subscribed. This
explains the unusually large blocking pairs and blocking residents in envy-free matchings. Furthermore, note that $Def(M_s, G)$ is a lower-bound on both the number of blocking pairs
and blocking residents. On all instances, for popular matchings the number of blocking pairs (BPC) is {at most $3$ times the optimal}
whereas the number of blocking residents (BR) is {close to the optimal value}.
\item {\bf Number of Envy-pairs:}  Although we do not report it explicitly, the number of envy pairs in an envy-free matching is trivially
zero and for any matching it is upper bounded by the number of blocking pairs. Since the number of blocking pairs is significantly small
for popular matchings, we conclude that the number of envy-pairs is also small.

\item {\bf Number of residents matched to rank-1 hospitals:} 
For any matching, a desirable criteria is to match as many participants to their top-choice. Again on this count,
we see that popular matchings match about {15\%--38\% } more residents to their top choice hospital (See Column $\RR_1(M)$ in the above tables).

\end{itemize}


\subsection{Results on HR instances}
\label{sec:res-HR}
To analyze the quality of popular matchings on various data-sets,
we generate an HR instance $G = (\HH \cup \RR, E)$,  compute a resident optimal stable matching $M_s$, a maximum cardinality
popular matching $M_p$ and a popular matching among maximum cardinality matchings $M_m$.
The matchings $M_p$ and $M_m$ are generated by creating  reduced instances $G_2$ and $G_{|\RR|}$ respectively,
and executing the resident proposing Gale-Shapley algorithm in the respective instance.
Theoretically, we need to construct $G_{|\RR|}$ for computing $M_m$, practically
we observe that almost always a constant number suffices say constructing $G_{10}$ suffices.
We now describe our output parameters. 

\noindent\underline{{\bf Parameters of interest:}}
For any matching $M$, let $\RR_1(M)$ denote the number of residents matched to rank-1 hospitals in $M$.
For two matchings $M$ and $M'$, let  $\mathcal{V}_{\RR}(M, M')$ denote the number of residents that prefer $M$ over $M'$.
For each instance we compute the following:
\begin{itemize}
\item $S(M_s):$  size of the stable matching $M_s$ in $G$.
\item For $M$ to be one of $M_p$ or $M_m$ define:
\begin{itemize}
\item $\Delta:$ $\frac{|M| - |M_s|}{|M_s|} \times 100$. This denotes the percentage increase in size of $M_s$ when compared to $M$.
When comparing $M_s$ with either $M_p$ or $M_m$, $\Delta$ is guaranteed to be non-negative.
    The larger this value, the better the matching in terms of size as compared to $M_s$. 
\item $\Delta_1 :$  $\frac{\RR_1(M) - \RR_1(M_s)}{\RR_1(M_s)} \times 100$.
  This denotes the percentage increase in number of rank-1 residents
  of $M_s$ when compared to $M$.
  As discussed in the Introduction, there is no guarantee that this value is non-negative.
  However, we prefer that the value is as large as possible.
\item $\Delta_{\RR} :$  $\frac{\mathcal{V}_{\RR}(M, M_s)- \mathcal{V}_{\RR}(M_s, M)}{ |\RR| }\times 100$.
  This denotes the percentage increase in number  of resident votes of $M$ when compared to $M_s$.
  Similar to $\Delta_1$, there is no apriori guarantee that more residents prefer $M$ over $M_s$.
  A positive value for this parameter indicates that $M$ is {\it more resident popular}
  as compared to  $M_s$. That is, in an election where only residents
  vote, a majority of the residents would like to move from $M_s$ to $M$. 
\REM{  It is well known that \cite{GI89} there does not exist any matching $M'$ stable or otherwise
  such that a subset of  residents are better off in $M'$ as compared to $M_s$ and 
  no resident is worse off in $M'$ as compared to $M_s$.
  However, a positive value for $\Delta_{\RR}$ in our results for $M$ (either $M_p$ or $M_m$)
  does not contradict the above.
  This is because in the matching $M$  some residents are allowed to be worse off in $M$ as compared to $M_s$.}
\item $BP(M):$ $\frac{\mbox{number of blocking pairs in $M$}} {|E| - |M|} \times 100$.
  The minimum value for $BP(M)$ can be $0$ (for a stable matching)
  and the maximum value can be 100, due to the choice of the denominator ($|E| - |M|$).
  Since matchings $M_m$ and $M_p$ are not stable, this  parameter  is  expected to be positive and we 
  consider it as the {\it price} we pay to get positive values for  $\Delta$, $\Delta_1$ and $\Delta_{\RR}$.
\end{itemize}
\end{itemize}

\noindent We now present our results on data-sets generated using different models.
In each case we start with a stable marriage instance (with $|\RR| = |\HH|$)
and gradually increase the capacity.
As before, a column with the legend $\uparrow$ implies that larger values are better for that column. Analogously, a column with the legend $\downarrow$ implies
smaller values are better.

\begin{table}[ht]
\begin{tabular}{|c||c||c|c|c|c||c|c|c|c|}
\hline
\multicolumn{2}{|c|}{$|\RR|=1000, k=5$}&\multicolumn{4}{c||}{$M_p$ vs $M_s$}&\multicolumn{4}{c|}{$M_m$ vs $M_s$} \\
\hline
\hline
  {$|\HH|$} & $S(M_s)$ & $\Delta \uparrow $ & $BP(M_p) \downarrow$ & $\Delta_{1} \uparrow$ & $\Delta_{\RR} \uparrow$ &
                       $\Delta \uparrow $ & $BP(M_m) \downarrow $ & $\Delta_{1} \uparrow $ & $\Delta_{\RR} \uparrow $ \\
\hline
  {$1000$} & 757.90 & {\bf 11.81 } & 4.66 & -3.49 & 5.25 & {\bf 12.79 } & 5.34 & -4.02 & 6.41 \\
  {$100$} & 823.50 & {\bf 12.93 } & 8.57 & -1.64 & 7.41 & {\bf 13.99 } & 9.96 & -2.80 & 8.58 \\
  {$20$} & 870.70 & {\bf 11.65 } & 12.22 & 0.24 & 7.32 & {\bf 12.25 } & 14.47 & -0.36 & 7.47 \\
  {$10$} & 890.00 & {\bf 10.68 } & 16.32 & 0.76 & 2.37 & {\bf 10.80 } & 16.57 & 0.73 & 2.64 \\
\hline
\end{tabular}
\vspace{0.1in}
\caption{Data generated using the \Master model. All values except $S(M_s)$ are percentages.}
\label{tab:SM1-1000}
\vspace{-0.2in}
\end{table}
\noindent {\Master}:
Table~\ref{tab:SM1-1000} shows our results on data-sets generated using the \Master model.
Here, we see that for all data-sets we get at least 10.5\% increase
in the size when comparing $M_s$ vs $M_p$ (column~3) and $M_s$ vs $M_m$ (column~7).
The negative value of $\Delta_1$ (columns~5 and 9) indicates that
we reduce the number of residents matched to their rank-1 hospitals by at most $4.02$\% in our
experiments and we also marginally gain for smaller values of $|\HH|$.
The parameter $\Delta_{\RR}$ is observed to be positive which shows that a majority of residents
prefer $M_p$ over $M_s$ (column~6) and also prefer $M_m$ over $M_s$ (column~10).
Finally, the value of $BP$ (columns~4 and 8) goes on increasing as we reduce the value of $|\HH|$.

\noindent {\Shuffle}:
The results obtained on data-sets generated using the \Shuffle model are
presented in Table~\ref{tab:SM2-1000}.
The size gains are at least 6\% when comparing $M_s$ with $M_p$ (column~3) and
$M_m$ (column~7).
Looking at the values of $\Delta_1$ from column~5 and column~9 we see that the
number of residents matched to their rank-1 hospitals are almost always more,
with up to 18\% getting their rank-1 choices.
We also observe that $\Delta_{\RR}$ is always positive, which implies that
majority of the residents prefer $M_p$ and $M_m$ over $M_s$.

\begin{table}[ht]
\begin{tabular}{|c||c||c|c|c|c||c|c|c|c|}
\hline
\multicolumn{2}{|c|}{$|\RR|=1000, k=5$}&\multicolumn{4}{c||}{$M_p$ vs $M_s$}&\multicolumn{4}{c|}{$M_m$ vs $M_s$} \\
\hline
\hline
  {$|\HH|$} & $S(M_s)$ & $\Delta \uparrow$ & $BP(M_p)\downarrow$ & $\Delta_{1} \uparrow$ & $\Delta_{\RR} \uparrow$ &
                       $\Delta \uparrow$ & $BP(M_m)\downarrow$ & $\Delta_{1} \uparrow$ & $\Delta_{\RR} \uparrow$ \\
\hline
  {$1000$} & 776.80 & {\bf 9.39 } & 2.33 & 0.52 & 4.27 & {\bf 10.20 } & 2.80 & -0.14 & 5.39 \\
  {$100$} & 856.00 & {\bf 8.56 } & 3.55 & 8.72 & 7.80 & {\bf 9.23 } & 4.13 & 9.79 & 9.38 \\
  {$20$} & 900.80 & {\bf 7.10 } & 5.50 & 13.87 & 9.86 & {\bf 7.52 } & 6.01 & 15.55 & 11.37 \\
  {$10$} & 935.40 & {\bf 6.03 } & 16.57 & 17.35 & 5.77 & {\bf 6.15 } & 16.76 & 18.02 & 6.32 \\
\hline
\end{tabular}
\vspace{0.1in}
\caption{Data generated using the \Shuffle model. All values except $S(M_s)$ are percentages.}
\label{tab:SM2-1000}
\vspace{-0.3in}
\end{table}


\noindent {\bf Processed HR couples data-set}: Table~\ref{tab:exp3} shows our results
on publicly available HR with couples data-set~\cite{public}.
As seen from the columns with different $\Delta$ values, 
popular matchings perform favourably on all desired parameters on these data-sets. This
is similar to our observations on data generated using the \Shuffle model. We also investigated
the relation between data generated using \Shuffle model and the data used in this experiment and found
that the data-sets are similar in their characteristics. This confirms
that \Shuffle is a reasonable model. Our results on these data-sets confirm that popular matchings perform favourably on
variants of the \Shuffle model.

\begin{table}[ht]
\begin{tabular}{|c||c||c|c|c|c||c|c|c|c|}
\hline
\multicolumn{2}{|c|}{$|\RR|=100, k=3\ldots5$}&\multicolumn{4}{c||}{$M_p$ vs $M_s$}&\multicolumn{4}{c|}{$M_m$ vs $M_s$} \\
\hline
\hline
  {$|\HH|$} & $S(M_s)$ & $\Delta \uparrow $ & $BP(M_p) \downarrow$ & $\Delta_{1}\uparrow$ & $\Delta_{\RR} \uparrow$ &
                       $\Delta \uparrow $ & $BP(M_m) \downarrow$ & $\Delta_{1}\uparrow$ & $\Delta_{\RR} \uparrow$ \\
\hline
  {$90$} & 84.67 & {\bf 10.10 } & 6.26 & 2.79 & 4.62 & {\bf 11.81 } & 8.55 & 1.20 & 7.16 \\
  {$50$} & 87.19 & {\bf 9.61 } & 8.25 & 4.53 & 4.99 & {\bf 11.01 } & 10.77 & 3.12 & 6.47 \\
  {$20$} & 91.35 & {\bf 7.92 } & 13.57 & 9.30 & 4.28 & {\bf 8.41 } & 14.93 & 8.22 & 3.86 \\
  {$10$} & 93.53 & {\bf 6.61 } & 19.94 & 5.34 & -1.86 & {\bf 6.73 } & 20.43 & 4.99 & -2.00 \\
\hline
\end{tabular}
\vspace{0.1in}
\caption{Processed data-sets from~\cite{public}. All values except $S(M_s)$ are
percentages.}
\label{tab:exp3}
\vspace{-0.2in}
\end{table}


\noindent{\bf Real world data-sets from IIT-M}: 
Table~\ref{tab:real-iitm} shows our results on data-sets obtained from the IIT-M elective allocation (the three rows in the table correspond to
the Aug--Nov~2016, Jan--May~2017 and Aug--Nov~2017 humanities elective allocation data respectively).

\begin{table}[ht]
\setlength\tabcolsep{3.3 pt}
\begin{tabular}{|c|c|c|c||c||c|c|c|c||c|c|c|c|}
\hline
\multicolumn{5}{|c|}{$|\RR|, |\HH|, m$, avg. pref. length}&\multicolumn{4}{c||}{$M_p$ vs $M_s$}&\multicolumn{4}{c|}{$M_m$ vs $M_s$} \\
\hline
\hline
  $|\RR|$ & $|\HH|$ & $|E|$ & $apl$ & $S(M_s)$ & $\Delta \uparrow $ & $BP(M_p) \downarrow$ & $\Delta_{1} \uparrow $ & $\Delta_{\RR} \uparrow $ &
                       $\Delta \uparrow $ & $BP(M_m) \downarrow$ & $\Delta_{1} \uparrow $ & $\Delta_{\RR} \uparrow $ \\
\hline
  $483$ & 18 & 5313 & 11.00 & 481 & {\bf 0.41 } & 1.32 & 0 & -0.20 & {\bf 0.41 } & 1.32 & 0 & -0.20  \\
  $729$ & 16 & 4534 & 6.21 & 675 & {\bf 8.00 } & 31.98 & 7.39 & -5.48 & {\bf 8.00 } & 31.98 & 7.39 & -5.48  \\
  $655$ & 14 & 2689 & 4.10 & 487 & {\bf 18.27 } & 16.42 & 14.97 & 7.17 & {\bf 23.81 } & 29.91 & 24.06 & 5.49  \\
\hline
\end{tabular}
\hspace{0.1in}
 \caption{Real data-sets from IIT-M elective allocation. All values except $S(M_s)$ are percentages.}
 \label{tab:real-iitm}
 \vspace{-0.3in}
\end{table}

For each data-set we list the number of students ($|\RR|$), the number of courses ($|\HH|$), the sum of total
preferences ($m$) and the average preference list over the set of students ($apl$). On an average, each course has a capacity of 50. For each course
a custom ranking criteria is used to rank students who have expressed preference in the course.
As seen in Table~\ref{tab:real-iitm}, for the Jan--May 2017 (row~2) and Aug--Nov 2017 (row~3) data-sets,
popular matchings perform very favourably as compared to stable matching.


\newpage
\bibliography{references}

\newpage

\section{Example Instances}
\label{sec:app_example}
\subsection{Instance where a stable matching is favourable over a maximum cardinality popular matching}
\begin{figure}[ht]
\begin{minipage}{0.45\linewidth}
\begin{center}
\begin{tabular}{cccc}
$\bf{r_1:}$&$h_1$\\
$\bf{r_2:}$&$h_2$ &$h_4$&$h_3$\\
$\bf{r_3:}$&$h_1$& $h_2$\\
$\bf{r_4:}$&$h_4$
\end{tabular}
\end{center}
\end{minipage}
\begin{minipage}{0.5\linewidth}
\begin{center}
\begin{tabular}{ccccc}
$\bf{ [0,1] \hspace{0.1in} h_1:}$ &$r_3$&$r_1$\\
$\bf{ [0,1] \hspace{0.1in} h_2:}$ &$r_3$ & $r_2$\\
$\bf{ [0,1] \hspace{0.1in} h_3:}$ &$r_2$\\
$\bf{ [0,1] \hspace{0.1in} h_4:}$ &$r_4$ & $r_2$\\
\end{tabular}
\end{center}
\end{minipage}
\caption{Example instance in which stable matching $M_s$ matches more residents to their rank-1 hospitals
than the maximum cardinality popular matching $M_p$.}
\label{fig:question1}
\end{figure}
Here, we present an HR instance $G = (\RR \cup \HH, E)$ (in fact a stable marriage instance) 
which shows that (i) the stable matching $M_s$ in $G$ matches more residents to their rank-1 hospitals
than  the maximum cardinality popular matching $M_m$ in $G$ and 
(ii) more number of residents prefer $M_s$ over $M_m$. This example shows that theoretically
there are no guarantees on these parameters for a popular matching.

Let  $\RR = \{r_1, \ldots, r_4\}$ and $\HH = \{h_1, \ldots, h_4\}$ with the preferences of the residents and hospitals as given in
Figure~\ref{fig:question1}. All hospitals have an upper quota of $1$. The instance admits a unique stable matching 
$M_s = \{ (r_2, h_2), (r_3, h_1), (r_4, h_4) \}$ whereas the maximum cardinality popular matching is $M_p = \{ (r_1, h_1), (r_2, h_3), (r_3, h_2), (r_4, h_4) \}$.
The stable matching  $M_s$ matches three residents to their rank-1 hospitals whereas $M_p$ matches exactly one resident to its rank-1 hospital.
Similarly two residents prefer $M_s$ over $M_p$ and exactly one resident prefers $M_p$ over $M_s$.

\subsection{Instance where max. cardinality popular matching is favourable over a stable matching}
Consider another instance where $\RR = \{r_1, \ldots, r_5\}$ and $\HH = \{h_1, \ldots, h_5\}$. All
hospitals have a lower quota of 0 and upper quota of 1. The preferences of the residents and the hospitals
are as given in ~\ref{fig:question2}.
The matching $M_s = \{ (r_1, h_4), (r_3, h_1), (r_4, h_5), (r_5, h_3) \}$ is stable in the instance
whereas the matching $M_p = \{ (r_1, h_4), (r_2, h_5), (r_3, h_1), (r_4, h_3), (r_5, h_2) \}$ is a maximum cardinality popular matching in the instance.
The matching $M_p$ has three residents matched to their rank-1 hospitals as opposed to $M_s$ which
has only two residents matched to their rank-1 hospitals. Furthermore, it is easy to see that more
residents prefer $M_p$ over $M_s$ than the other way.
\begin{figure}[ht]
\begin{minipage}{0.45\linewidth}
\begin{center}
\begin{tabular}{cccc}
$\bf{r_1:}$&$h_5$&$h_4$\\
$\bf{r_2:}$&$h_5$&$h_3$\\
$\bf{r_3:}$&$h_1$\\
$\bf{r_4:}$&$h_3$&$h_5$\\
$\bf{r_5:}$&$h_3$&$h_2$&$h_1$
\end{tabular}
\end{center}
\end{minipage}
\begin{minipage}{0.5\linewidth}
\begin{center}
\begin{tabular}{ccccc}
$\bf{[0,1] \hspace{0.1in}h_1:}$ &$r_3$&$r_5$\\
$\bf{[0,1] \hspace{0.1in}h_2:}$ &$r_5$\\
$\bf{[0,1] \hspace{0.1in}h_3:}$ &$r_5$&$r_2$&$r_4$\\
$\bf{[0,1] \hspace{0.1in}h_4:}$ &$r_1$\\
$\bf{[0,1] \hspace{0.1in}h_5:}$ &$r_4$&$r_1$&$r_2$
\end{tabular}
\end{center}
\label{fig:question2}
\end{minipage}

\caption{Example instance in which max. cardinality popular matching $M_p$ matches more residents to their rank-1 hospitals
than the stable matching $M_s$. More residents prefer $M_p$ over $M_s$ that the other way.}
\label{fig:question2}
\end{figure}


\end{document}